\newtheorem{proposition}{Proposition}
\newtheorem{theorem}{Theorem}
\newtheorem{corollary}{Corollary}
\theoremstyle{definition}
\definecolor{darkgreen}{rgb}{0,0.6,0.2}
\newcommand{\Rb}{\mathbb{R}} 
\newcommand{\Cb}{\mathbb{C}} 
\newcommand{\abs}[1]{\left| #1 \right|} 
\newcommand{\pair}[2]{\langle\,#1\,,\,#2\,\rangle} 
\newcommand{\spanno}[1]{{\rm span}\left\{ #1 \right\}}
\newcommand{\no}[1]{\left\|#1\right\|} 
\newcommand{\hh}{\mathcal{H}} 
\newcommand{\lh}{\mathcal{L}(\hh)} 
\newcommand{\loneh}{\mathcal{L}^1(\hh)} 
\newcommand{\elle}[1]{\mathcal{L} (#1)}
\newcommand{\sta}{\mathcal{S}(\aa)} 
\newcommand{\ip}[2]{\left\langle\,#1\,|\,#2\,\right\rangle} 
\newcommand{\kb}[2]{|#1\rangle\langle#2|} 
\newcommand{\tr}[1]{\textrm{tr}\left[#1\right]} 
\newcommand{\id}{\mathbbm{1}} 
\renewcommand{\aa}{\mathcal{A}} 
\newcommand{\bb}{\mathcal{B}} 
\newcommand{\cc}{\mathcal{C}} 
\newcommand{\sa}[1]{#1^{\rm sa}} 
\newcommand{\votimes}{\bar{\otimes}} 
\newcommand{\ellesa}[1]{\mathcal{L}^{\rm sa} (#1)}
\newcommand{\Mo}{\mathsf{M}}
\newcommand{\Moh}{\widehat{\mathsf{M}}}
\newcommand{\No}{\mathsf{N}}
\newcommand{\Noh}{\widehat{\mathsf{N}}}
\newcommand{\Po}{\mathsf{P}}
\newcommand{\Poh}{\widehat{\mathsf{P}}}
\newcommand{\wit}{\xi} 
\newcommand{\witt}{\zeta} 
\newcommand{\De}[1]{{\mathcal{D}}(#1)} 
\newcommand{\CC}[2]{\mathcal{C}(#1;#2)}
\newcommand{\cCC}[2]{\mathcal{C}_{\rm c}(#1;#2)}
\newcommand{\CIW}[2]{\mathcal{W}(#1;#2)}
\newcommand{\bmu}{{\boldsymbol\mu}}
\newcommand{\Prg}{P^{{\rm prior}}_{{\rm guess}}}
\newcommand{\Ppg}{P^{{\rm post}}_{{\rm guess}}}
\newcommand{\en}{\mathcal{E}} 
\newcommand{\pa}{\mathscr{P}} 
\newcommand{\Phiv}{{\vec{\Phi}}} 
\newcommand{\Psiv}{{\vec{\Psi}}} 
\newcommand{\Mov}{\vec{\Mo}} 
\begin{document}

\title[]{Witnessing incompatibility of quantum channels}

\author[Carmeli]{Claudio Carmeli}
\address{\textbf{Claudio Carmeli}; D.I.M.E., Universit\`a di Genova, Savona, I-17100, Italy}
 \email{claudio.carmeli@gmail.com}

\author[Heinosaari]{Teiko Heinosaari}
\address{\textbf{Teiko Heinosaari}; Turku Centre for Quantum Physics, Department of Physics and Astronomy, University of Turku, Turku, FI-20014, Finland}
\email{teiko.heinosaari@utu.fi}

\author[Miyadera]{Takayuki Miyadera}
\address{\textbf{Takayuki Miyadera}; Department of Nuclear Engineering, University of Kyoto, Kyoto 6158540, Japan}
\email{miyadera@nucleng.kyoto-u.ac.jp}

\author[Toigo]{Alessandro Toigo}
\address{\textbf{Alessandro Toigo}; Dipartimento di Matematica, Politecnico di Milano, Milano, I-20133, Italy, and I.N.F.N., Sezione di Milano, Milano, I-20133, Italy}
\email{alessandro.toigo@polimi.it}

\begin{abstract}
We introduce the notion of incompatibility witness for quantum 
channels, defined as an affine functional that is non-negative on 
all  pairs of compatible channels and strictly negative on some 
incompatible pair. 
This notion extends the recent definition of incompatibility witnesses for quantum measurements. 
We utilize the general framework of channels acting on arbitrary finite dimensional von Neumann algebras, thus allowing us to investigate incompatibility witnesses on measurement-measurement, measurement-channel and channel-channel pairs.
We prove that any incompatibility witness can be implemented as a state discrimination 
task in which some intermediate classical information is obtained before completing the task. 
This implies that any incompatible pair of channels gives an advantage over compatible pairs in some such state discrimination task.
\end{abstract}

\maketitle

\section{Introduction}

Two input-output devices, such as measurements, channels or instruments, are called incompatible if they are not parts of a common third device \cite{HeMiZi16}.
The concept of incompatibility, taken at this level of generality, gives a common ground for several important notions and statements of quantum information.
For instance, the `no cloning' theorem is declaring that two identity channels are incompatible, and statements about optimal quantum cloning devices are then statements about compatibility of some channels, such as depolarizing channels \cite{BrDVEkFuMaSm98,Werner98,KeWe99,Cerf00bis,BrBuHi01,Hashagen17}.
As another example, antidegradable channels are exactly those channels that are compatible with themselves, whereas entanglement breaking channels are those channels that are compatible with arbitrary many copies of themselves \cite{HeMi17}.

The traditional and most extensively studied topic in the area of incompatibility is the incompatibility of pairs of measurement devices.
It has been recently shown that two quantum measurements are incompatible if and only if they give an advantage in some state discrimination task \cite{CaHeTo19,UoKrShYuGu19,SkSuCa19}. 
Physically speaking, the connection with state discrimination tasks and incompatibility of measurement devices can be understood by comparing two state discrimination scenarios, where partial information is given either before or after measurements are to be performed \cite{CaHeTo18}. 
Indeed, only for compatible pairs of measurements the state discrimination capability is unaffected by the stage when partial information is given, as pairs of this kind can be postprocessed from a single measurement device performed with no reference to partial information.
In the present paper, we show that this physical interpretation, with a slight modification, carries also to the incompatibility of quantum channels.

In \cite{CaHeTo19}, the above result was obtained by introducing the concept of incompatibility witnesses and then proving that, up to detection equivalence, every incompatibility witness is associated to some state discrimination task with partial intermediate information.
We generalize this approach and prove that a similar statement holds for all incompatible pairs of quantum channels.
To do it, we first define the concept of channel incompatibility witness and then prove that all such witnesses can be brought into a standard form related to a variation of the state discrimination task described in \cite{CaHeTo19}.
The state discrimination task we will consider does not require to couple the measured system with any ancillary system. 
In particular, the advantage of our approach is that it does not rely on entanglement.

In the formulation used in the current work, channels are completely positive linear maps between finite dimensional von Neumann algebras.
This framework (or something closely related to it) has been used in several earlier studies, e.g. \cite{Keyl02,Jencova12}.
A measurement can be seen as a channel from an abelian von Neumann algebra, and our formalism therefore covers the main theorem of \cite{CaHeTo19} as a special case.
It also allows to treat the incompatibility between channels and measurements that is the underlying source for fundamental noise-disturbance trade-off in quantum measurements \cite{HeMi13,HeReRyZi18,CaHeMiTo19,HaMi19}.

Our investigation is organized as follows. 
After having recalled some elementary facts about channels and von Neumann algebras in Sec.~\ref{sec:prel} and described the convex compact set of all compatible pairs of channels in Sec.~\ref{sec:inco}, in Sec.~\ref{sec:wit} we provide the definition and main properties of channel incompatibility witnesses. Section \ref{sec:SDT} then describes the particular state discrimination task we will be concerned with, and contains the proof that any incompatibility witness is associated to a task of this kind for some choice of the state ensemble to be detected. Finally, Secs.~\ref{sec:ex_triv} and \ref{sec:ex_clon} contain some examples of channel incompatibility witnesses. The examples of Sec.~\ref{sec:ex_triv} are derived from the measurement incompatibility witnesses constructed in \cite{CaHeTo19} by means of two mutually unbiased bases, while the example of Sec.~\ref{sec:ex_clon} is related to the optimal approximate cloning method of \cite{Werner98,KeWe99}.

\section{Preliminaries}\label{sec:prel}

We consider systems described by finite dimensional von Neumann algebras, that is, complex $*$-algebras that are isomorphic to block matrix algebras endowed with the uniform matrix norm $\no{\cdot}$. If $\aa$ is such an algebra, its {\em predual} $\aa_*$ coincides with the linear dual $\aa^*$. We denote by $\pair{a}{A}$ the canonical pairing between elements $a\in\aa_*$ and $A\in\aa$. The notations $\sa{\aa}$ and $\aa^+$ are used for the set of all selfadjoint and all positive elements of $\aa$, respectively. The analogous subsets of $\aa_*$ are
\begin{align*}
\sa{\aa}_* & = \{a\in\aa_*\mid\pair{a}{A}\in\Rb \ \ \forall A\in\sa{\aa}\}\,, \\ 
\aa^+_* &= \{a\in\aa_*\mid\pair{a}{A}\geq 0 \ \ \forall A\in\aa^+\} \,.
\end{align*}

The {\em states} of $\aa$ constitute the convex set $\sta = \{a\in\aa^+_*\mid\pair{a}{\id_\aa} = 1\}$, where $\id_\aa$ is the identity element of $\aa$. 
A {\em measurement} with a finite outcome set $X$ is described by a map $\Mo:X\to\aa$ such that $\Mo(x)\in\aa^+$ for all $x\in X$ and $\sum_{x\in X}\Mo(x) = \id_\aa$. 
The probability of obtaining an outcome $x$ by performing the measurement $\Mo$ in the state $a$ is then $\pair{a}{\Mo(x)}$. 
A measurement $\Mo$ is called {\em informationally complete} if the associated probability distributions are different for all states, i.e., for any two states $a\neq a'$ there is an outcome $x$ such that $\pair{a}{\Mo(x)} \neq \pair{a'}{\Mo(x)}$.
The informational completeness of $\Mo$ is equivalent to the condition that the real linear span of the set $\{\Mo(x)\mid x\in X\}$ coincides with the real vector space $\sa{\aa}$ \cite{SiSt92}.

A finite dimensional quantum system is associated with the von Neumann algebra $\lh$ of all linear maps on a finite dimensional complex Hilbert space $\hh$, whereas a finite classical system is described by the von Neumann algebra $\ell^\infty(X)$ of all complex functions on a finite set $X$. The respective norms are the uniform operator norm $\no{A} = \max\{\no{Au}/\no{u}\mid u\in\hh\setminus\{0\}\}$ and the sup norm $\no{F} = \max\{\abs{F(x)}\mid x\in X\}$. 
In these two extreme cases, the states of the system are described by positive trace-one operators and classical probability distributions, respectively. 
The framework of general von Neumann algebras allows us to consider also hybrid systems, like e.g.~the classical-quantum output of a quantum measuring process, or quantum systems subject to superselection rules.

Let $\bb$ be another finite dimensional von Neumann algebra. 
A {\em channel} connecting the system $\aa$ with the system $\bb$ is a linear map $\Phi:\aa_*\to\bb_*$ such that its adjoint $\Phi^*$ is completely positive and unital. 
The adjoint of $\Phi$ is the linear map $\Phi^*:\bb\to\aa$ defined by 
\begin{align*}
\pair{a}{\Phi^*(B)} = \pair{\Phi(a)}{B}
\end{align*}
for all $a\in\aa_*$ and $B\in\bb$.

A measurement $\Mo:X\to\aa$ can be regarded as a channel $\Moh:\aa_*\to\ell^1(X)$, where $\ell^1(X) = \ell^\infty(X)_*$ is the $\ell^1$-space of all complex functions on $X$. 
This identification is obtained by setting $\Moh(a) = \pair{a}{\Mo(\cdot)}$,  
or, equivalently, $\Moh^*(\delta_x) = \Mo(x)$,
where $\delta_x\in\ell^\infty(X)$ denotes the Kronecker delta function at $x$.

When $\aa=\bb=\lh$, any channel connecting the system $\aa$ with the system $\bb$ is a quantum channel in the usual sense. 
Moreover, any measurement $\Mo:X\to\lh$ is a quantum measurement in the usual sense and can be identified with a positive operator valued measure \cite{MLQT12}.
In this case, the predual $\loneh=\lh_*$ is the normed space of all linear operators on $\hh$ endowed with the trace-class norm. When instead $\aa=\bb=\ell^\infty(X)$, channels connecting $\aa$ with $\bb$ constitute classical data processings and just coincide with measurements $\Mo:X\to\ell^1(X)$ \cite{EIT06}.

\section{Incompatibility of channels}\label{sec:inco}

The incompatibility of quantum channels has been defined and studied in \cite{HeMi17,HaHePe14,Haapasalo15}. 
That definition has been generalized in \cite{Plavala17} for different types of devices in general probabilistic theories, while in \cite{Kuramochi18a} it is extended to cover the case of two channels with arbitrary outcome algebras. 
In the following we state the definition of (in)compatible channels explicitly in our current framework.

If $\bb_1$ and $\bb_2$ are two von Neumann algebras, we denote by $\bb_1\votimes\bb_2$ their algebraic tensor product canonically regarded as a von Neumann algebra; see e.g. \cite[Sec.~IV, Def.~1.3]{TOA1}. 
The projection onto the $i$th factor is the channel $\Pi_i:(\bb_1\votimes\bb_2)_*\to\bb_{i*}$ with $\Pi_1^*(B_1) = B_1\otimes\id_{\bb_2}$ and $\Pi_2^*(B_2) = \id_{\bb_1}\otimes B_2$ for all $B_i\in\bb_i$. The $i$th {\em margin} of a channel $\Phi:\aa_*\to(\bb_1\votimes\bb_2)_*$ is then defined as the composition channel $\Pi_i\circ\Phi$. 
Two channels $\Phi_1:\aa_*\to\bb_{1*}$ and $\Phi_2:\aa_*\to\bb_{2*}$ are {\em compatible} if there exists a channel $\Phi$ such that $\Phi_1 = \Pi_1\circ \Phi$ and $\Phi_2 = \Pi_2\circ \Phi$. 
In this case, we say that $\Phi$ is a {\em joint channel} of $\Phi_1$ and $\Phi_2$. 
Otherwise, $\Phi_1$ and $\Phi_2$ are called {\em incompatible}.

The compatibility of $\Phi_1$ and $\Phi_2$ is preserved if they are concatenated with other channels $\Psi_i : \bb_{i*}\to\cc_{i*}$. 
Indeed, if $\Phi$ is a joint channel of $\Phi_1$ and $\Phi_2$, then the composition $(\Psi_1\otimes\Psi_2)\circ\Phi$ is a joint channel of $\Psi_1\circ\Phi_1$ and $\Psi_2\circ\Phi_2$.
The tensor product of two channels is defined by the relation $(\Psi_1\otimes\Psi_2)^* = \Psi_1^*\otimes\Psi_2^*$.

In the particular case $\bb_1 = \ell^\infty(X_1)$ and $\bb_2 = \ell^\infty(X_2)$, the compatibility of channels coincides with the usual notion of compatibility for measurements due to the aforementioned identification $\Mo\simeq\Moh$ \cite[Prop.~5]{HeMi17}. 
Indeed, let $\pi_i$ be the projection onto the $i$th factor of the Cartesian product $X_1\times X_2$, and recall that two measurements $\Mo_1 : X_1\to\aa$ and $\Mo_2 : X_2\to\aa$ are called {\em compatible} if there exists a third measurement $\Mo : X_1\times X_2\to\aa$ such that its {\em margins} $\pi_i\Mo(x_i) = \sum_{(y_1,y_2)\in \pi_i^{-1}(x_i)} \Mo(y_1,y_2)$ coincide with $\Mo_i$. 
The equivalence of the two notions of compatibility then directly follows from the equality $\Pi_i\circ\Moh = \widehat{\pi_i\Mo}$.

Similarly, when $\bb_1=\ell^\infty(X)$ and $\bb_2=\lh$, any channel connecting the system $\aa$ with the system $\bb_1\votimes\bb_2$ can be identified with an {\em instrument} \cite{MLQT12}. 
In this case, compatibility of two channels $\Phi_1:\aa_*\to\ell^1(X)$ and $\Phi_2:\aa_*\to\loneh$ amounts to measurement-channel compatibility in the sense of \cite{HeMi13}.

We denote by $\CC{\aa}{\bb_1,\bb_2}$ the convex compact set of all pairs of channels $(\Phi_1,\Phi_2)$, where $\Phi_i : \aa_*\to\bb_{i*}$. 
Convex combinations in $\CC{\aa}{\bb_1,\bb_2}$ are defined componentwise.
We let $\cCC{\aa}{\bb_1,\bb_2}$ be the subset of all compatible pairs of channels. 
This subset is itself convex and compact, since it is the image of the convex compact set of channels $\Phi:\aa_*\to(\bb_1\votimes\bb_2)_*$ under the convex mapping $\Phi\mapsto (\Pi_1\circ\Phi,\Pi_2\circ\Phi)$. 
In the following, we show that the inclusion $\cCC{\aa}{\bb_1,\bb_2}\subseteq\CC{\aa}{\bb_1,\bb_2}$ is strict unless $\aa$ is a commutative algebra or either $\bb_1$ or $\bb_2$ is trivial.
A different but related result has been proven in \cite{Plavala16}.

\begin{proposition}\label{prop:triviality}
$\cCC{\aa}{\bb_1,\bb_2}=\CC{\aa}{\bb_1,\bb_2}$ if and only if $\aa$ is an abelian von Neumann algebra or $\bb_i = \Cb$ for $i\in\{1,2\}$.
\end{proposition}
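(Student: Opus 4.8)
The plan is to prove the two implications separately; the forward direction is routine, so I begin with it. Suppose first that $\aa$ is abelian, so that $\aa=\ell^\infty(X)$ for a finite set $X$ and the point masses $\delta_x$ are states of $\aa_*$. Given any pair $(\Phi_1,\Phi_2)\in\CC{\aa}{\bb_1,\bb_2}$, I would define a joint channel $\Phi:\aa_*\to(\bb_1\votimes\bb_2)_*$ on the basis by $\Phi(\delta_x)=\Phi_1(\delta_x)\otimes\Phi_2(\delta_x)$ and extend by linearity. Since each $\Phi_i(\delta_x)$ is a state of $\bb_i$, the product is a state of $\bb_1\votimes\bb_2$; as the domain algebra is abelian, positivity and normalization suffice to make $\Phi$ a channel, and $\Pi_i\circ\Phi=\Phi_i$ holds on each $\delta_x$ and hence everywhere. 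If instead $\bb_2=\Cb$, the only channel $\Phi_2:\aa_*\to\Cb$ is the normalization $a\mapsto\pair{a}{\id_\aa}$, and under the canonical identification $\bb_1\votimes\Cb\cong\bb_1$ the channel $\Phi_1$ itself is a joint channel of $(\Phi_1,\Phi_2)$; the case $\bb_1=\Cb$ is symmetric. This settles $\cCC{\aa}{\bb_1,\bb_2}=\CC{\aa}{\bb_1,\bb_2}$ in all these cases.

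For the converse I would prove the contrapositive: assuming $\aa$ non-abelian and both $\bb_1,\bb_2\neq\Cb$, I would exhibit an incompatible pair by transferring a known measurement incompatibility into the channel setting through an encoding and decoding of classical bits. Since $\aa$ is non-abelian it contains two non-commuting projections $P,Q$ (rank-one projections onto non-orthogonal lines inside a matrix block of dimension $\ge 2$), and the associated two-outcome sharp measurements $\Mo_1=\{P,\id_\aa-P\}$ and $\Mo_2=\{Q,\id_\aa-Q\}$ are incompatible, because two sharp binary observables admit a joint measurement exactly when the underlying projections commute. Regarding these as channels $\Moh_1,\Moh_2$ into $\ell^1(\{0,1\})$, I now push them into $\bb_1,\bb_2$: because $\bb_i\neq\Cb$, it carries two nonzero orthogonal projections, hence two states $\rho_i^0,\rho_i^1$ with orthogonal supports, and I let $E_i:\ell^1(\{0,1\})\to\bb_{i*}$ be the encoding channel $\delta_k\mapsto\rho_i^k$ and set $\Phi_i=E_i\circ\Moh_i$.

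The key point is that $E_i$ is reversible by a channel. Since $\rho_i^0$ and $\rho_i^1$ have orthogonal supports, they are perfectly distinguished by a two-outcome measurement, giving a decoding channel $D_i:\bb_{i*}\to\ell^1(\{0,1\})$ with $D_i\circ E_i=\mathrm{id}$, so that $D_i\circ\Phi_i=\Moh_i$. Now I invoke the stability of compatibility under post-processing recalled in Section~\ref{sec:inco}: if $(\Phi_1,\Phi_2)$ were compatible with joint channel $\Phi$, then $(D_1\otimes D_2)\circ\Phi$ would be a joint channel of $(D_1\circ\Phi_1,D_2\circ\Phi_2)=(\Moh_1,\Moh_2)$, contradicting the incompatibility of $\Mo_1$ and $\Mo_2$. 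Hence $(\Phi_1,\Phi_2)$ is incompatible and the inclusion $\cCC{\aa}{\bb_1,\bb_2}\subsetneq\CC{\aa}{\bb_1,\bb_2}$ is strict.

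The main obstacle I anticipate lies entirely in the converse, and within it two points deserve care. First, one must justify that non-commuting projections yield genuinely incompatible sharp measurements; I would prove the needed ``jointly measurable iff commuting'' fact by showing that the corner effects $G_{ij}$ of a hypothetical joint measurement satisfy $G_{00}\le P\wedge Q$, $G_{01}\le P\wedge Q^\perp$, and so on, whence the four meet projections are mutually orthogonal and sum to $\id_\aa$, forcing $[P,Q]=0$. Second, one must handle the smallest nontrivial codomains, such as $\bb_i=\Cb^2$: here the orthogonal-support encoding together with its perfectly distinguishing decoding is exactly the device that makes the argument uniform across all nontrivial $\bb_i$, since it only requires the existence of two orthogonal nonzero projections, which any von Neumann algebra of dimension at least two possesses.
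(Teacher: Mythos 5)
Your proof is correct, and its converse takes a genuinely different route from the paper's, even though both arguments end up testing compatibility on the very same channels. For the ``if'' part you do what the paper does: your point-mass joint channel $\Phi(\delta_x)=\Phi_1(\delta_x)\otimes\Phi_2(\delta_x)$ is precisely $(\Phi_1\otimes\Phi_2)\circ\Gamma$ for the paper's broadcasting channel $\Gamma$, and the $\bb_i=\Cb$ case is handled identically. For the ``only if'' part, your channels $\Phi_i=E_i\circ\Moh_i$ coincide with the paper's measure-and-prepare maps $\Phi_i(a)=\sum_k\pair{a}{P_{i,k}}\,b_{i,k}$, but the two arguments diverge from there. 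The paper argues directly: given a joint channel $\Phi$, it applies $\Phi^*$ to the tensor products $Q_{1,h}\otimes Q_{2,k}$ of the support projections, uses $0\le\Phi^*(Q_{1,h}\otimes Q_{2,k})\le P_{1,h}$ and $\le P_{2,k}$ to conclude that these positive elements sit in orthogonal corners and hence have vanishing mutual products, and deduces that the arbitrarily chosen projections $P_{1,1},P_{2,1}$ commute --- proving abelianness of $\aa$ in one sweep, with no mention of measurements. You instead argue by contrapositive and by reduction: the orthogonal supports make your encodings reversible ($D_i\circ E_i=\mathrm{id}$), so the post-processing stability recalled in Sec.~\ref{sec:inco} transfers any joint channel of $(\Phi_1,\Phi_2)$ into a joint channel of $(\Moh_1,\Moh_2)$, and a standalone lemma (sharp binary observables are jointly measurable only if their projections commute) finishes the job. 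Your decomposition is more modular: it exhibits the obstruction as purely measurement-theoretic and shows that reversible classical encoding propagates it into arbitrary nontrivial outcome algebras; the price is two extra ingredients the paper's proof avoids, namely the sharp-measurement lemma and the measurement/channel compatibility identification. Note also that your lemma's proof (corner effects dominated by the meets $P\wedge Q$, $P\wedge Q^\perp$, etc., which are mutually orthogonal and forced by the margin relations to sum to $\id_\aa$, whence $PQ=P\wedge Q=QP$) is at bottom the same operator computation the paper runs on the images under $\Phi^*$, so the two converses share their analytic core while differing in architecture.
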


\begin{proof}
If $\bb_i = \Cb$ for -- say -- $i=1$, then the trivial channel $\Phi_1 = \pair{\cdot}{\id_\aa}$ is the unique channel connecting the system $\aa$ with the system $\bb_1$. This channel is compatible with any channel $\Phi_2 : \aa_*\to\bb_{2*}$. Indeed, $\Phi_2$ is itself a joint channel of $\Phi_1$ and $\Phi_2$ since $\bb_1\votimes\bb_2 = \bb_2$. 
Thus, $\cCC{\aa}{\bb_1,\bb_2}=\CC{\aa}{\bb_1,\bb_2}$ in this case.

If $\aa$ is abelian, there exists a finite set $X$ such that $\aa$ is isomorphic to the von Neumann algebra $\ell^\infty(X)$. Since $\ell^\infty(X)_* = \ell^1(X)$ and $(\ell^\infty(X)\votimes\ell^\infty(X))_* = \ell^1(X\times X)$, we can define a broadcasting map $\Gamma:\aa_*\to (\aa\votimes\aa)_*$ as
$$
[\Gamma(f)](x,y) = \begin{cases}
f(x) & \text{ if $x=y$} \,, \\
0 & \text{ if $x\neq y$} \,.
\end{cases}
$$
The adjoint $\Gamma^*$ is positive and unital, hence $\Gamma$ is a channel. Indeed, for any linear map having an abelian von Neumann algebra as its domain or image, positivity implies complete positivity by \cite[Thms.~3.9 and 3.11]{CBMOA03}. 
The two margins of $\Gamma$ are the identity channel ${\rm id} : \aa_*\to\aa_*$. 
Therefore, the fact that compatibility is preserved in concatenation implies that any two channels $\Phi_1=\Phi_1\circ{\rm id}$ and $\Phi_2=\Phi_2\circ{\rm id}$ are compatible.
We conclude that $\cCC{\aa}{\bb_1,\bb_2}=\CC{\aa}{\bb_1,\bb_2}$ also in this case.

Finally, if $\cCC{\aa}{\bb_1,\bb_2}=\CC{\aa}{\bb_1,\bb_2}$, then either $\bb_i = \Cb$ for some $i\in\{1,2\}$, or for all $i\in\{1,2\}$ there exist two disjointly supported states $b_{i,1},b_{i,2}\in\mathcal{S}(\bb_i)$. In the latter case, either $\aa=\Cb$ and thus $\aa$ is abelian, or, for any two fixed projections $P_{1,1},P_{2,1}\in\aa\setminus\{0,\id_\aa\}$, let $P_{1,2},P_{2,2}\in\aa$ be such that $P_{i,1} + P_{i,2} = \id_\aa$ for $i=1,2$. Further, let $\Phi_i:\aa_*\to\bb_{i*}$ be the linear map defined as
\begin{equation*}
\Phi_i(a) = \sum_{k=1,2} \pair{a}{P_{i,k}}\, b_{i,k} \,.
\end{equation*}
The unitality of $\Phi_i^*$ is clear. Moreover, since $\Phi_i^*$ is positive and its image $\Phi_i^*(\bb_i) = \spanno{P_{i,1},P_{i,2}}$ is a commutative algebra, it follows that $\Phi_i^*$ is completely positive by \cite[Thm.~3.9]{CBMOA03}. Thus, $\Phi_i$ is a channel, and by the assumed hypothesis we can pick a joint channel $\Phi$ of $\Phi_1$ and $\Phi_2$. If $Q_{i,1}$ and $Q_{i,2}$ are the support projections of the states $b_{i,1}$ and $b_{i,2}$, respectively, then
\begin{equation*}
0\leq \Phi^*(Q_{1,h}\otimes Q_{2,k}) \leq
\begin{cases}
\Phi^*(Q_{1,h}\otimes \id_{\bb_2}) = \Phi_1^*(Q_{1,h}) = P_{1,h} \,,\\
\Phi^*(\id_{\bb_1}\otimes Q_{2,k}) = \Phi_2^*(Q_{2,k}) = P_{2,k}
\end{cases}
\end{equation*}
for all $h,k=1,2$. It follows that $\Phi^*(Q_{1,h}\otimes Q_{2,k})\Phi^*(Q_{1,h'}\otimes Q_{2,k'}) = 0$ whenever $(h,k)\neq (h',k')$, and hence the projections
$$
P_{1,1} = \Phi^*(Q_{1,1}\otimes Q_{2,1}) + \Phi^*(Q_{1,1}\otimes Q_{2,2})\,, \quad P_{2,1} = \Phi^*(Q_{1,1}\otimes Q_{2,1}) + \Phi^*(Q_{1,2}\otimes Q_{2,1})
$$
commute. Since the choice of $P_{1,1}$ and $P_{2,1}$ was arbitrary, this proves that all projections commute in $\aa$, which again implies that $\aa$ is abelian.
\end{proof}

The next corollary is a restatement of \cite[Thm.~3]{BaBaLeWi07} within the framework of von Neumann algebras. 
Interestingly, the assumption that $\aa$ is finite dimensional is essential for its validity \cite[Thm.~3.10]{KaLuLu15}.

\begin{corollary}
The identity channel ${\rm id}:\aa_*\to\aa_*$ is compatible with itself if and only if $\aa$ is abelian.
\end{corollary}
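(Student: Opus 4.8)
The plan is to read the corollary as the special instance $\bb_1=\bb_2=\aa$, $\Phi_1=\Phi_2={\rm id}$ of the general compatibility framework, and to reduce \emph{both} implications to Proposition \ref{prop:triviality} rather than reproving anything about broadcasting from scratch. In other words, I would not invoke the no-broadcasting theorem \cite{BaBaLeWi07} directly; instead I would show that self-compatibility of the identity is equivalent to the collapse $\cCC{\aa}{\bb_1,\bb_2}=\CC{\aa}{\bb_1,\bb_2}$ for \emph{all} output algebras, a condition which the proposition already characterises.

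For the ``if'' direction, suppose $\aa$ is abelian. Then the broadcasting channel $\Gamma:\aa_*\to(\aa\votimes\aa)_*$ exhibited in the proof of Proposition \ref{prop:triviality} has both margins equal to ${\rm id}$, so $\Gamma$ is by definition a joint channel of the pair $({\rm id},{\rm id})$; hence ${\rm id}$ is compatible with itself. (Equivalently, one may simply quote the equality $\cCC{\aa}{\aa,\aa}=\CC{\aa}{\aa,\aa}$ from the proposition.)

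For the substantive ``only if'' direction, assume ${\rm id}$ is compatible with itself and let $\Phi:\aa_*\to(\aa\votimes\aa)_*$ be a joint channel, so that $\Pi_1\circ\Phi=\Pi_2\circ\Phi={\rm id}$. The key move is to bootstrap this single joint channel into a joint channel for every pair: given arbitrary von Neumann algebras $\bb_1,\bb_2$ and arbitrary channels $\Psi_i:\aa_*\to\bb_{i*}$, the concatenation rule recorded just before the proposition shows that $(\Psi_1\otimes\Psi_2)\circ\Phi$ is a joint channel of $\Psi_1\circ{\rm id}=\Psi_1$ and $\Psi_2\circ{\rm id}=\Psi_2$. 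Therefore $(\Psi_1,\Psi_2)$ is compatible no matter what it is, i.e.\ $\cCC{\aa}{\bb_1,\bb_2}=\CC{\aa}{\bb_1,\bb_2}$ for all $\bb_1,\bb_2$.

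Finally I would feed this back into Proposition \ref{prop:triviality}: taking $\bb_1=\bb_2=\elle{\Cb^2}$, both of which are distinct from $\Cb$, the equality just obtained forces $\aa$ to be abelian, which is the desired conclusion. I expect no computational obstacle; the one idea that has to be got right is conceptual rather than technical, namely that a self-compatible identity furnishes a \emph{universal} broadcasting channel $\Phi$ which can be post-composed with arbitrary $(\Psi_1,\Psi_2)$ to destroy all incompatibility at once. This is precisely the content of the no-broadcasting theorem \cite{BaBaLeWi07}, and once it is in place the appeal to the proposition with nontrivial output algebras is routine. Note also that finite dimensionality of $\aa$ enters only through Proposition \ref{prop:triviality}, in agreement with the remark that it is essential \cite{KaLuLu15}.
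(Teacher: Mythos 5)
Your proposal is correct and follows essentially the same route as the paper: both arguments use the concatenation rule to show that self-compatibility of ${\rm id}$ collapses $\cCC{\aa}{\bb_1,\bb_2}$ onto $\CC{\aa}{\bb_1,\bb_2}$, and then invoke Proposition \ref{prop:triviality}, the only difference being that the paper works directly with $\bb_1=\bb_2=\aa$ while you specialize to $\bb_1=\bb_2=\elle{\Cb^2}$. Your choice is a slightly tidier way to apply the proposition, since it rules out the alternative $\bb_i=\Cb$ explicitly rather than absorbing the case $\aa=\Cb$ into ``$\aa$ abelian'' as the paper does implicitly.
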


\begin{proof}
Since any channel $\Phi:\aa_*\to\aa_*$ is the composition $\Phi=\Phi\circ{\rm id}$, the inclusion $({\rm id},{\rm id})\in\cCC{\aa}{\aa,\aa}$ is equivalent to the equality $\cCC{\aa}{\aa,\aa} = \CC{\aa}{\aa,\aa}$, and then to $\aa$ being abelian by Proposition \ref{prop:triviality}.
\end{proof}

\section{Channel incompatibility witnesses}\label{sec:wit}

From now on, we will always assume that the inclusion $\cCC{\aa}{\bb_1,\bb_2}\subseteq\CC{\aa}{\bb_1,\bb_2}$ is strict. In wiew of Proposition \ref{prop:triviality}, this amounts to require that $\aa$ is not abelian and $\dim\bb_i \geq 2$ for all $i=1,2$.

For convenience, we denote $\Phiv=(\Phi_1,\Phi_2)$.
A {\em (channel) incompatibility witness} (CIW) is a map $\wit:\CC{\aa}{\bb_1,\bb_2}\to\Rb$ having the following three properties:
\begin{enumerate}[(W1)]
\item $\wit(\Phiv) \geq 0$ for all $\Phiv\in\cCC{\aa}{\bb_1,\bb_2}$;\label{it:CIWpositivity}
\item $\wit(\Phiv) < 0$ at least for some incompatible pair $\Phiv\in\CC{\aa}{\bb_1,\bb_2}$;\label{it:CIWnegativity}
\item $\wit(t\Phiv + (1-t)\Psiv))=t\wit(\Phiv) + (1-t)\wit(\Psiv)$ for all $\Phiv,\Psiv\in\CC{\aa}{\bb_1,\bb_2}$ and $t\in(0,1)$.\label{it:CIWaffinity}
\end{enumerate}
We denote by $\CIW{\aa}{\bb_1,\bb_2}$ the set of all such maps $\wit$.

If $\wit\in\CIW{\aa}{\bb_1,\bb_2}$ and $\wit(\Phiv) < 0$, we say that $\wit$ {\em detects} the incompatible pair of channels $\Phiv$; the set of all detected pairs is denoted by $\De{\wit}$. 
The larger is the set $\De{\wit}$, the more efficient is the CIW $\wit$ in detecting incompatibility. 
Given another $\wit'\in\CIW{\aa}{\bb_1,\bb_2}$, we say that $\wit'$ is {\em finer} than $\wit$ whenever $\De{\wit}\subseteq\De{\wit'}$. 
Further, two witnesses $\wit$ and $\wit'$ are called {\em detection equivalent} if $\De{\wit} = \De{\wit'}$. 
For any choice of $\wit$, we can always construct another CIW $\bar{\wit}$ which is finer than $\wit$ by setting 
\begin{equation}\label{eq:wit'}
\bar{\wit}(\Phiv) = \wit(\Phiv) - \min\{\wit(\Psiv)\mid\Psiv\in\cCC{\aa}{\bb_1,\bb_2}\} \, .
\end{equation}
In the case $\wit = \bar{\wit}$, we say that $\wit$ is {\em tight}.

We observe that when restricting to the particular case in which $\aa$ is a full matrix algebra and the algebras $\bb_i$ are abelian, the above definition of CIW coincides with the definition of incompatibility witnesses for quantum measurements introduced in \cite{CaHeTo19}. Indeed, as we have already seen, measurements $\Mo_i : X_i\to\aa$ and channels $\Phi_i:\aa_*\to\bb_{i*}$ are naturally identified when $\bb_i = \ell^\infty(X_i)$, and the two notions of compatibility for measurements and channels are the same under this identification. 
Properties (W\ref{it:CIWpositivity})-(W\ref{it:CIWaffinity}) are then a rewriting of the similar ones stated in \cite{CaHeTo19}. 
Related investigations on incompatibility witnesses have been reported in \cite{Jencova18,BlNe18}.

By standard separation results for convex compact sets, witnesses are enough to detect all incompatible pairs of channels.

\begin{proposition}\label{prop:separation}
For any incompatible pair of channels $\Phi_1:\aa_*\to\bb_{1*}$ and $\Phi_2:\aa_*\to\bb_{2*}$, there exist a channel incompatibility witness $\wit\in\CIW{\aa}{\bb_1,\bb_2}$ detecting the pair $(\Phi_1,\Phi_2)$.
\end{proposition}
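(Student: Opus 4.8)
The plan is to deduce the existence of $\wit$ from the Hahn--Banach separation theorem, exploiting that $\cCC{\aa}{\bb_1,\bb_2}$ is a compact convex set living in a finite-dimensional real vector space. First I would fix the ambient space. Since the adjoint $\Phi_i^*$ of a channel is positive, each $\Phi_i$ preserves Hermiticity, hence is determined by its real-linear restriction to the self-adjoint part $\sa{\aa}_*$ of $\aa_*$. Consequently every pair $(\Phi_1,\Phi_2)$ lies in the finite-dimensional real vector space $V$ of all pairs of Hermiticity-preserving real-linear maps $\aa_*\to\bb_{i*}$ ($i=1,2$), endowed with its unique Hausdorff vector-space topology. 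Inside $V$, the set $\CC{\aa}{\bb_1,\bb_2}$ is compact and convex, and $\cCC{\aa}{\bb_1,\bb_2}$ is the compact convex subset described in Sec.~\ref{sec:inco}. The given incompatible pair satisfies $\Phiv\notin\cCC{\aa}{\bb_1,\bb_2}$.

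Next I would separate $\Phiv$ from $\cCC{\aa}{\bb_1,\bb_2}$. As the latter is compact and convex and $\Phiv$ lies outside it, the separation theorem provides a real-linear functional $L$ on $V$ with $L(\Phiv)<L(\Psiv)$ for all $\Psiv\in\cCC{\aa}{\bb_1,\bb_2}$. By compactness the value $c:=\min\{L(\Psiv)\mid\Psiv\in\cCC{\aa}{\bb_1,\bb_2}\}$ is attained, and strict separation forces $c>L(\Phiv)$. I then set $\wit:=(L-c)|_{\CC{\aa}{\bb_1,\bb_2}}$, the restriction to $\CC{\aa}{\bb_1,\bb_2}$ of the affine map $\Psiv\mapsto L(\Psiv)-c$. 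Property (W\ref{it:CIWaffinity}) is immediate since $L$ is linear; property (W\ref{it:CIWpositivity}) holds because $\wit(\Psiv)=L(\Psiv)-c\ge 0$ for every compatible pair by the choice of $c$; and $\wit(\Phiv)=L(\Phiv)-c<0$ gives (W\ref{it:CIWnegativity}) while showing $\Phiv\in\De{\wit}$. Thus $\wit$ is the desired member of $\CIW{\aa}{\bb_1,\bb_2}$ detecting $\Phiv$.

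I expect no serious obstacle, precisely because the finite-dimensional setting removes all topological difficulties: closedness of $\cCC{\aa}{\bb_1,\bb_2}$ and continuity of $L$ are automatic, and strictly separating a point from a disjoint compact convex set is elementary. The one point that genuinely needs care is identifying the correct ambient real vector space $V$ and checking that compatible and incompatible pairs both sit inside it, so that a single separating functional applies to both. A secondary, purely formal, step is to reconcile the globally linear functional produced by Hahn--Banach with the merely affine functional demanded by the definition of a CIW; this is handled by the restriction together with the additive shift by $c$.
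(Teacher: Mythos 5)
Your proof is correct and follows essentially the same route as the paper: both embed the pairs of channels in the finite-dimensional real vector space of Hermiticity-preserving linear maps, strictly separate the incompatible pair from the compact convex set $\cCC{\aa}{\bb_1,\bb_2}$ by a linear functional, and then shift by a constant to turn that functional into a witness. The only cosmetic difference is that you take the shift to be the attained minimum of $L$ over compatible pairs (which incidentally makes your witness tight), whereas the paper uses the separating constant $\delta$ supplied directly by the cited separation theorem.
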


\begin{proof}
Denote by $\ellesa{\aa_*;\bb_{i*}}$ the real vector space of all complex linear maps $\Phi_i:\aa_*\to\bb_{i*}$ satisfying $\Phi_{i*}(\sa{\aa}_*)\subseteq\sa{\bb}_{i*}$. Then, the sets $\CC{\aa}{\bb_1,\bb_2}$ and $\cCC{\aa}{\bb_1,\bb_2}$ are convex compact subset of the Cartesian product $\ellesa{\aa_*;\bb_{1*}}\times\ellesa{\aa_*;\bb_{2*}}$. 
If $\Phiv\notin\cCC{\aa}{\bb_1,\bb_2}$, by \cite[Cor.~11.4.2]{CA70} there exist elements $\phi_i\in\ellesa{\aa_*;\bb_{i*}}^*$ and $\delta\in\Rb$ such that $\sum_{i=1,2}\pair{\phi_i}{\Phi_i} > \delta$ and $\sum_{i=1,2}\pair{\phi_i}{\Psi_i} \leq \delta$ for all $\Psiv\in\cCC{\aa}{\bb_1,\bb_2}$. Here, $\ellesa{\aa_*;\bb_{i*}}^*$ denotes the linear dual of $\ellesa{\aa_*;\bb_{i*}}$, and $\pair{\phi_i}{\Phi_i}$ is the canonical pairing between elements $\phi_i\in\ellesa{\aa_*;\bb_{i*}}^*$ and $\Phi_i\in\ellesa{\aa_*;\bb_{i*}}$. Setting $\wit(\Psiv) = \delta - \sum_{i=1,2}\pair{\phi_i}{\Psi_i}$ for all $\Psiv\in\CC{\aa}{\bb_1,\bb_2}$, we thus obtain a CIW for which $\Phiv\in\De{\wit}$.
\end{proof}

\section{Channel incompatibility witnesses as a state discrimination task}\label{sec:SDT}

We consider the following state discrimination task, in which Bob is asked to retrieve a string of classical information which Alice sends to him through some communication channel which can be classical, quantum or semi-quantum.

\begin{enumerate}[(i)]

\item Alice randomly picks a label $z$ with probability $p(z)$ and she encodes it into a state $a_z$. 
The label $z$ is chosen within either one of two finite disjoint sets $X_1$ and $X_2$. The state $a_z$ belongs to the predual of the von Neumann agebra $\aa$ which describes Alice's system.\label{it:task_begins}

\item Alice then sends the state $a_z$ to Bob. At a later and still unspecified time, she also communicates him the set $X_i$ from which she picked the label $z$.

\item  Bob processes the received state $a_z$ by converting it into a bipartite system $\bb = \bb_1\votimes\bb_2$. This amounts to applying a channel $\Phi:\aa_*\to (\bb_1\votimes\bb_2)_*$, thus obtaining the bipartite state $\Phi(a_z)$ on Bob's side. 

\item In order to retrieve the label $z$, Bob performs two local measurements $\Mo_1$ on the subsystem $\bb_1$ and $\Mo_2$ on the subsystem $\bb_2$. Each measurement $\Mo_j$ has outcomes in the corresponding label set $X_j$. The probability that Bob jointly obtains the outcomes $x_1$ and $x_2$ from the respective measurements $\Mo_1$ and $\Mo_2$ is thus $\pair{\Phi(a_z)}{\Mo_1(x_1)\otimes\Mo_2(x_2)}$.

\item Finally, according to the set $X_i$ communicated by Alice, Bob's guess for the label $z$ is the outcome $x_i$.\label{it:task_ends}
\end{enumerate}

The disjoint sets $X_1$ and $X_2$, the probability $p$ on the union $X_1\cup X_2$ and the states $\{a_z\mid z\in X_1\cup X_2\}$ used by Alice in her encoding are fixed and known by both parties.
Also the two measurements $\Mo_1$ and $\Mo_2$ used by Bob are fixed. 
Only the channel $\Phi:\aa_*\to (\bb_1\votimes\bb_2)_*$ can be freely chosen by Bob.

According to the time when Alice communicates to Bob the chosen set $X_i$, two scenarios then arise.
\begin{enumerate}[(a)]
\item {\em Preprocessing information scenario}: Alice communicates the value of $i$ to Bob {\em before} he processes the received state $a_z$. Bob can then optimize the choice of $\Phi$ according to Alice's information. If $\Phi_{(i)}$ is the channel he uses when Alice communicates him the set $X_i$, his probability of guessing the correct label is
\begin{align*}
\Prg & = \sum_{i=1,2}\,\sum_{\substack{x_1\in X_1\\ x_2\in X_2}} p(x_i)\pair{\Phi_{(i)}(a_{x_i})}{\Mo_1(x_1)\otimes\Mo_2(x_2)} \\
&= \sum_{i=1,2}\,\sum_{z\in X_i} p(z)\pair{\Pi_i\circ\Phi_{(i)}(a_z)}{\Mo_i(z)} \,.
\end{align*}
This quantity depends only on the two margin channels $\Phi_1 = \Pi_1\circ\Phi_{(1)}$ and $\Phi_2 = \Pi_2\circ\Phi_{(2)}$. Since $\Phi_{(1)}$ and $\Phi_{(2)}$ are arbitrary, the pair $(\Phi_1,\Phi_2)$ can be any element of $\CC{\aa}{\bb_1,\bb_2}$.\label{it:pre_scenario}
\item {\em Postprocessing information scenario}: Alice communicates the value of $i$ to Bob {\em after} he processes the received state $a_z$. 
Bob is then forced to choose $\Phi$ without knowing the set $X_i$ chosen by Alice.
His channel $\Phi$ is thus the same regardless of the value of $i$. 
In this scenario, Bob's probability of guessing the correct label is
\begin{align*}
\Ppg & = \sum_{i=1,2}\,\sum_{\substack{x_1\in X_1\\ x_2\in X_2}} p(x_i)\pair{\Phi(a_{x_i})}{\Mo_1(x_1)\otimes\Mo_2(x_2)} \\
& = \sum_{i=1,2} \, \sum_{z\in X_i} p(z)\pair{\Pi_i\circ\Phi(a_z)}{\Mo_i(z)} \,.
\end{align*}
The latter quantity depends on the two margins $\Phi_1 = \Pi_1\circ\Phi$ and $\Phi_2 = \Pi_2\circ\Phi$ of a single channel $\Phi$.
These need to be a pair of compatible channels $(\Phi_1,\Phi_2)\in\cCC{\aa}{\bb_1,\bb_2}$.\label{it:post_scenario}
\end{enumerate}

It is useful to merge the probability distribution $p$ on $X_1\cup X_2$ and the states $\{a_z\mid z\in X_1\cup X_2\}$ into a single map $\en:X_1\cup X_2\to\aa_*$, defined as $\en(z) = p(z)\,a_z$. We call this map a {\em state ensemble} with label set $X_1\cup X_2$.  Its defining properties are that $\en(z)\in\aa^+_*$ for all $z$ and $\sum_{z\in X_1\cup X_2} \en(z)\in\mathcal{S}(\aa)$.
We further denote by $\pa$ the pair of disjoint sets $(X_1,X_2)$, and we collect the two measurements $\Mo_1$ and $\Mo_2$ within a single vector $\Mov = (\Mo_1,\Mo_2)$. 
The procedure described in steps \eqref{it:task_begins}-\eqref{it:task_ends} is thus completely determined by the triple $(\pa,\en,\Mov)$, together with the choice between scenarios \eqref{it:pre_scenario} and \eqref{it:post_scenario}.

In the two guessing probabilities described above, the pair of sets $\pa$, the state ensemble $\en$ and the measurement vector $\Mov$ are fixed parameters, while the channels $\Phi_i = \Pi_i\circ \Phi_{(i)}$ and $\Phi$ are variable quantities. To stress it, we rewrite
\begin{align}
& \Prg(\Phi_1,\Phi_2 \parallel \pa,\en,\Mov) = \sum_{i=1,2}\sum_{z\in X_i} \pair{\Phi_i(\en(z))}{\Mo_i(z)} \,, \\ 
& \Ppg(\Phi\parallel\pa,\en,\Mov) = \Prg(\Pi_1\circ\Phi,\Pi_2\circ\Phi \parallel \pa,\en,\Mov) \,. 
\end{align}
Optimizing these probabilities over the respective sets of channels, we obtain Bob's maximal guessing probabilities in the two scenarios:
\begin{gather}
\Prg(\pa,\en,\Mov) = \max\{\Prg(\Phiv \parallel \pa,\en,\Mov)\mid \Phiv\in\CC{\aa}{\bb_1,\bb_2}\} \,, \label{eq:Pprior}\\
\Ppg(\pa,\en,\Mov) = \max\{\Prg(\Phiv \parallel \pa,\en,\Mov)\mid \Phiv\in\cCC{\aa}{\bb_1,\bb_2}\} \,. \label{eq:Ppost}
\end{gather}
Clearly, $\Prg(\pa,\en,\Mov) \geq \Ppg(\pa,\en,\Mov)$. 
Whenever the inequality is strict, the expression
\begin{equation}\label{eq:associated_CIW}
\wit_{\pa,\en,\Mov}(\Phiv) = \Ppg(\pa,\en,\Mov) - \Prg(\Phiv \parallel \pa,\en,\Mov) \qquad \forall \Phiv\in\CC{\aa}{\bb_1,\bb_2}
\end{equation}
defines a tight CIW $\wit_{\pa,\en,\Mov}\in\CIW{\aa}{\bb_1,\bb_2}$. We call it the CIW {\em associated with the state discrimination task} $(\pa,\en,\Mov)$. 
Remarkably, no generality is lost in considering only CIWs of this form, as it is shown in the following main result.

\begin{theorem}\label{th:main}
Suppose $X_1$ and $X_2$ are two finite disjoint sets, $\Mo_1:X_1\to\bb_1$ and $\Mo_2:X_2\to\bb_2$ are two informationally complete measurements, and let $\pa = (X_1,X_2)$ and $\Mov = (\Mo_1,\Mo_2)$. Then, for any channel incompatibility witness $\wit\in\CIW{\aa}{\bb_1,\bb_2}$, there exists a state ensemble $\en:X_1\cup X_2\to\aa_*$ and real constants $\alpha > 0$ and $\Ppg(\pa,\en,\Mov)\leq\delta<\Prg(\pa,\en,\Mov)$ such that
\begin{equation}\label{eq:standard}
\wit(\Phiv) = \alpha\big[\delta - \Prg(\Phiv \parallel \pa,\en,\Mov)\big] \qquad \forall \Phiv\in\CC{\aa}{\bb_1,\bb_2}\,.
\end{equation}
In particular, $\bar\wit = \alpha\wit_{\pa,\en,\Mov}$, and thus the channel incompatibility witness $\wit_{\pa,\en,\Mov}$ is finer than $\wit$.
\end{theorem}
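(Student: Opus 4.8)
The plan is to represent the affine functional $\wit$ in the guessing-probability form \eqref{eq:standard} by exploiting the informational completeness of $\Mo_1,\Mo_2$ together with the unitality of channel adjoints. First I would proceed as in the proof of Proposition \ref{prop:separation}: property (W\ref{it:CIWaffinity}) lets me extend $\wit$ to an affine functional on the real vector space $\ellesa{\aa_*;\bb_{1*}}\times\ellesa{\aa_*;\bb_{2*}}$, so that
\begin{equation*}
\wit(\Phiv)=c-\sum_{i=1,2}\pair{\phi_i}{\Phi_i}
\end{equation*}
for some constant $c\in\Rb$ and functionals $\phi_i\in\ellesa{\aa_*;\bb_{i*}}^*$. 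Since every such functional is of the form $\Phi_i\mapsto\sum_k\pair{\Phi_i(a_k)}{B_k}$ for suitable $a_k\in\sa{\aa}_*$ and $B_k\in\sa{\bb_i}$, the whole of $\wit$ is encoded by finitely many such pairs.

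Next I would use informational completeness, which by the criterion recalled in Section \ref{sec:prel} says that $\{\Mo_i(z)\mid z\in X_i\}$ spans $\sa{\bb_i}$ over $\Rb$. Re-expanding each $B_k$ in this spanning set, I can rewrite
\begin{equation*}
\pair{\phi_i}{\Phi_i}=\sum_{z\in X_i}\pair{\Phi_i(a_{i,z})}{\Mo_i(z)}
\end{equation*}
for suitable elements $a_{i,z}\in\sa{\aa}_*$. This already has the shape of $\Prg(\Phiv\parallel\pa,\en,\Mov)$ with a candidate ``ensemble'' $z\mapsto a_{i,z}$; the only defect is that the $a_{i,z}$ need not be positive, hence do not yet define a genuine state ensemble.

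Fixing positivity and normalization is the main obstacle, and it is handled by a gauge freedom inherited from the two fixed measurements. For a fixed $b_i\in\aa_*$ (the same for every $z\in X_i$), replacing $a_{i,z}$ by $a_{i,z}+b_i$ alters the functional only by the additive constant $\pair{b_i}{\id_\aa}$: indeed $\sum_{z\in X_i}\Mo_i(z)=\id_{\bb_i}$ and $\Phi_i^*$ is unital, so $\sum_{z\in X_i}\pair{\Phi_i(b_i)}{\Mo_i(z)}=\pair{\Phi_i(b_i)}{\id_{\bb_i}}=\pair{b_i}{\id_\aa}$ independently of $\Phi_i$. I would then pick $b_i=\lambda_i s_i$ with $s_i$ a faithful state and $\lambda_i>0$ large enough that $\en(z):=a_{i,z}+b_i\in\aa^+_*$ for all $z\in X_i$, which is possible because a faithful state dominates any fixed self-adjoint functional on the compact set of normalized positive elements. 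Absorbing the constants $\pair{b_i}{\id_\aa}$ into $c$, and finally rescaling $\en$ by its positive total mass $N=\pair{\sum_{z}\en(z)}{\id_\aa}$ so that $\sum_z\en(z)$ becomes a state, I obtain \eqref{eq:standard} with $\alpha=N>0$ and with $\delta$ recording the shifted constant.

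It remains to check the bounds on $\delta$ and the final assertions. Writing $\Prg(\Phiv\parallel\pa,\en,\Mov)=\delta-\wit(\Phiv)/\alpha$, property (W\ref{it:CIWpositivity}) gives $\Prg(\Phiv\parallel\pa,\en,\Mov)\leq\delta$ for compatible $\Phiv$, whence $\Ppg(\pa,\en,\Mov)\leq\delta$ after maximizing over $\cCC{\aa}{\bb_1,\bb_2}$; property (W\ref{it:CIWnegativity}) provides an incompatible pair with $\wit<0$, which forces $\Prg(\Phiv\parallel\pa,\en,\Mov)>\delta$ and hence $\delta<\Prg(\pa,\en,\Mov)$. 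Finally, $\min_{\Psiv\in\cCC{\aa}{\bb_1,\bb_2}}\wit(\Psiv)=\alpha[\delta-\Ppg(\pa,\en,\Mov)]$ substituted into \eqref{eq:wit'} yields $\bar{\wit}=\alpha\,\wit_{\pa,\en,\Mov}$, and since multiplication by $\alpha>0$ preserves the detected set, $\wit_{\pa,\en,\Mov}$ is finer than $\wit$.
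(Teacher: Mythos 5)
Your proposal is correct and follows essentially the same route as the paper's proof: extend $\wit$ affinely to get $\wit(\Phiv)=\delta_0-\sum_i\pair{\phi_i}{\Phi_i}$, use informational completeness to expand $\phi_i$ over $\{\Mo_i(z)\mid z\in X_i\}$, shift the coefficients by a large multiple of a faithful state (the unitality of $\Phi_i^*$ guaranteeing this only adds a constant), normalize to obtain the ensemble $\en$ and the constants $\alpha,\delta$, and then read off the bounds on $\delta$ from (W1)--(W2) and the identity $\bar\wit=\alpha\wit_{\pa,\en,\Mov}$ from \eqref{eq:wit'}. The only cosmetic difference is that you allow separate shifts $\lambda_i s_i$ for $i=1,2$ where the paper uses a single $\beta a_0$ for both margins.
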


We emphasize that in Theorem \ref{th:main} the sets $X_1$, $X_2$ and the measurements $\Mo_1$, $\Mo_2$ are fixed quantities, while the state ensemble $\en$ and the real constants $\alpha$ and $\delta$ depend upon the CIW at hand. Thus, the only free parameters which  effectively enter the description of an arbitrary CIWs are just the quantities $\en$, $\alpha$ and $\delta$. For a tight CIW, the free parameters actually reduce to only $\en$ and $\alpha$.
We further note that by \cite[Prop.~1]{HeMaWo13}, there exist informationally complete measurements $\Mo_1:X_1\to\bb_1$ and $\Mo_2:X_2\to\bb_2$ such that the cardinalities of the respective outcome sets are $\abs{X_i} = \dim\bb_i$. 
As a consequence of this fact, one can always choose $X_1,X_2$ with cardinalities $\abs{X_i} = \dim\bb_i$.

\begin{proof}[Proof of Theorem \ref{th:main}]
As we have already seen in the proof of Proposition \ref{prop:separation}, the set $\CC{\aa}{\bb_1,\bb_2}$ is a convex subset of the Cartesian product $\ellesa{\aa_*;\bb_{1*}}\times\ellesa{\aa_*;\bb_{2*}}$, where we denote by $\ellesa{\aa_*;\bb_{i*}}$ the real vector space of all complex linear maps $\Phi_i:\aa_*\to\bb_{i*}$ satisfying $\Phi_{i*}(\sa{\aa}_*)\subseteq\sa{\bb}_{i*}$. Then, for any CIW $\wit\in\CIW{\aa}{\bb_1,\bb_2}$, by \cite[Prop.~S2 of the Supplementary Material]{CaHeTo19} there exist a dual element $(\phi_1,\phi_2)\in\ellesa{\aa_*;\bb_{1*}}^*\times\elle{\aa_*;\bb_{2*}}^*$ and $\delta_0\in\Rb$ such that
\begin{equation*}
\wit(\Phiv) = \delta_0 - \sum_{i=1,2} \pair{\phi_i}{\Phi_i} \qquad \forall \Phiv\in\CC{\aa}{\bb_1,\bb_2}\,.
\end{equation*}
The dual space $\ellesa{\aa_*;\bb_{i*}}^*$ is identified with the real algebraic tensor product $\sa{\aa}_*\otimes\sa{\bb}_i$ by setting
\begin{equation*}
\pair{a\otimes B_i}{\Phi_i} = \pair{\Phi_i (a)}{B_i} \qquad \forall a\in\sa{\aa}_*,\,B_i\in\sa{\bb}_i,\,\Phi_i\in\ellesa{\aa_*;\bb_{i*}} \,.
\end{equation*}
Then, since the set $\{\Mo_i(z)\mid z\in X_i\}$ spans $\sa{\bb}_i$, we have
\begin{equation*}
\phi_i = \sum_{z\in X_i} a_i(z)\otimes\Mo_i(z)
\end{equation*}
for some choice of elements $\{a_i(z)\mid z\in X_i\}$. Now, fix any faithful state of $\aa$, that is, any $a_0\in\mathcal{S}(\aa)$ such that $\pair{a_0}{A} > 0$ for all $A\in\aa^+$ with $A\neq 0$. Such a state exists by standard arguments \cite[Sec.~I.9, Exercise 3.(b)]{TOA1}. Then, if $\beta\in\Rb$ is such that
\begin{equation*}
\beta > \frac{\max \{\no{a_i(z)} \mid z\in X_i \,, i=1,2\}}{\min \{\pair{a_0}{A} \mid A\in\aa^+,\, \no{A} = 1\}} \,,
\end{equation*}
we have $\pair{\beta a_0 + a_i(z)}{A} > 0$ for all $A\in\aa^+$ with $A\neq 0$ and $z\in X_i$, $i=1,2$. Therefore, we can define the state ensemble $\en : X_1\cup X_2\to\aa_*$ given by
\begin{equation*}
\en(z) = \frac{1}{\alpha}\left(\beta a_0 + a_i(z)\right) \qquad\forall z\in X_i, \, i=1,2 \,,
\end{equation*}
where the normalization constant $\alpha>0$ is
\begin{equation*}
\alpha = \sum_{i=1,2} \sum_{z\in X_i} \pair{\beta a_0 + a_i(z)}{\id_\aa} \,.
\end{equation*}
{For the state ensemble $\en$, we have
\begin{align*}
\wit(\Phiv) & = \delta_0 - \sum_{i=1,2} \sum_{z\in X_i}\pair{a_i(z)\otimes\Mo_i(z)}{\Phi_i} = \delta_0 - \sum_{i=1,2} \sum_{z\in X_i}\pair{\Phi_i(a_i(z))}{\Mo_i(z)} \\
& = \delta_0 + 2\beta - \alpha\sum_{i=1,2} \sum_{z\in X_i}\pair{\Phi_i(\en(z))}{\Mo_i(z)} \\
& = \alpha\big[\delta - \Prg(\Phiv\parallel\pa,\en,\Mov)\big]\,,
\end{align*}
in which we set $\delta = (\delta_0 + 2\beta)/\alpha$.
Since $\wit$ is a CIW, property (W\ref{it:CIWpositivity}) and \eqref{eq:Ppost} imply the inequality $\delta\geq \Ppg(\pa,\en,\Mov)$, while on the other hand property (W\ref{it:CIWnegativity}) and \eqref{eq:Pprior} require that $\delta < \Prg(\pa,\en,\Mov)$. By inserting \eqref{eq:standard} into \eqref{eq:wit'} and using again \eqref{eq:Ppost}, we immediately obtain the equality $\bar\wit = \alpha\wit_{\pa,\en,\Mov}$, and hence the CIW $\wit_{\pa,\en,\Mov}$ is finer than $\wit$.}
\end{proof}

As a consequence of Theorem \ref{th:main}, for any pair of incompatible channels $(\Phi_1,\Phi_2)\in\CC{\aa}{\bb_1,\bb_2}$, there exists some state discrimination task in which Bob can improve his guessing probability by choosing among $\Phi_1$ and $\Phi_2$ according to the preprocessing information. 
From an equivalent point of view, whenever Bob's strategy is to arrange his channel $\Phi_i$ after he knows the value of $i$, one can find a triple $(\pa,\en,\Mov)$ that reveals Bob's use of preprocessing information. 
This is the content of the next corollary.

\begin{corollary}\label{cor:main}
Let $\pa = (X_1,X_2)$ and $\Mov = (\Mo_1,\Mo_2)$, with $X_i$ and $\Mo_i$ as in Theorem \ref{th:main}. 
Two channels $\Phi_1 : \aa_*\to\bb_{1*}$ and $\Phi_2 : \aa_*\to\bb_{2*}$ are incompatible if and only if there exists some state ensemble $\en:X_1\cup X_2\to\aa_*$ such that 
\begin{equation}
\Prg(\Phi_1,\Phi_2\parallel\pa,\en,\Mov) > \Ppg(\pa,\en,\Mov) \, .
\end{equation}
\end{corollary}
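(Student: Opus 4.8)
The plan is to prove the two implications separately. The reverse implication (existence of a distinguishing ensemble $\Rightarrow$ incompatibility) is essentially a restatement of the definition of $\Ppg$, while the forward implication (incompatibility $\Rightarrow$ existence of such an ensemble) is where the machinery of Proposition \ref{prop:separation} and Theorem \ref{th:main} is brought to bear. Throughout I write $\Phiv = (\Phi_1,\Phi_2)$ for the pair under consideration.

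For the reverse direction I would argue by contraposition. Suppose a state ensemble $\en$ is given with $\Prg(\Phi_1,\Phi_2\parallel\pa,\en,\Mov) > \Ppg(\pa,\en,\Mov)$, and assume toward a contradiction that $\Phiv\in\cCC{\aa}{\bb_1,\bb_2}$. Then, since formula \eqref{eq:Ppost} defines $\Ppg(\pa,\en,\Mov)$ as a maximum of $\Prg(\,\cdot\parallel\pa,\en,\Mov)$ taken over \emph{all} compatible pairs, the value at our particular compatible pair would satisfy $\Prg(\Phi_1,\Phi_2\parallel\pa,\en,\Mov) \leq \Ppg(\pa,\en,\Mov)$, contradicting the assumed strict inequality. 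Hence $\Phiv$ must be incompatible. This step requires nothing beyond unwinding the definition of the maximum.

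For the forward direction, suppose $\Phiv$ is incompatible. I would first invoke Proposition \ref{prop:separation} to obtain a witness $\wit\in\CIW{\aa}{\bb_1,\bb_2}$ detecting the pair, so that $\wit(\Phiv)<0$. I would then apply Theorem \ref{th:main} to this $\wit$, using the \emph{prescribed} data $\pa = (X_1,X_2)$ and $\Mov = (\Mo_1,\Mo_2)$, which yields a state ensemble $\en$ together with constants $\alpha>0$ and $\Ppg(\pa,\en,\Mov)\leq\delta<\Prg(\pa,\en,\Mov)$ satisfying $\wit(\Phiv) = \alpha[\delta - \Prg(\Phiv\parallel\pa,\en,\Mov)]$. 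Evaluating this identity at the incompatible pair and using $\wit(\Phiv)<0$ with $\alpha>0$ gives $\delta < \Prg(\Phi_1,\Phi_2\parallel\pa,\en,\Mov)$; combining with the inequality $\delta\geq\Ppg(\pa,\en,\Mov)$ supplied by the theorem produces the chain $\Prg(\Phi_1,\Phi_2\parallel\pa,\en,\Mov) > \delta \geq \Ppg(\pa,\en,\Mov)$, which is exactly the claimed strict inequality.

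I do not anticipate a genuine obstacle, as the corollary is largely a repackaging of Theorem \ref{th:main} together with the separation result. The only point demanding care is to confirm that the fixed quantities $\pa$ and $\Mov$ appearing in the corollary are precisely those for which Theorem \ref{th:main} guarantees the standard representation, so that the ensemble $\en$ produced by the theorem can be transported verbatim into the corollary's conclusion. It is the informational completeness of $\Mo_1$ and $\Mo_2$ that makes this matching legitimate, since that hypothesis is exactly what Theorem \ref{th:main} demands of the measurements.
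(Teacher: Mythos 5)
Your proposal is correct and follows essentially the same route as the paper: the easy direction by unwinding the definition of $\Ppg(\pa,\en,\Mov)$ as a maximum over compatible pairs, and the hard direction by combining Proposition \ref{prop:separation} with Theorem \ref{th:main} applied to the fixed data $(\pa,\Mov)$. The only cosmetic difference is that you conclude via the explicit representation \eqref{eq:standard} with the constants $\alpha$ and $\delta$, whereas the paper cites the equivalent ``finer than'' consequence $\bar\wit = \alpha\wit_{\pa,\en,\Mov}$ and reads off $\wit_{\pa,\en,\Mov}(\Phiv)<0$ from \eqref{eq:associated_CIW}.
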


{As in the statement of Theorem \ref{th:main}, also in the above corollary the sets $X_1$, $X_2$ and the measurements $\Mo_1$, $\Mo_2$ are independent of the incompatible channels $\Phi_1$ and $\Phi_2$. Indeed, only the state ensemble $\en$ needs to be arranged to detect incompatibility.}

\begin{proof}[{Proof of Corollary \ref{cor:main}}]
The `if' statement trivially follows from the definition \eqref{eq:Ppost} of $\Ppg(\pa,\en,\Mov)$, so we prove the `only if' part. By Proposition \ref{prop:separation}, there exists a witness $\wit\in\CIW{\aa}{\bb_1,\bb_2}$ such that $\Phiv\in\De{\wit}$. On the other hand, by Theorem \ref{th:main}, we can construct a state ensemble $\en:X_1\cup X_2\to\aa_*$ such that the CIW $\wit_{\pa,\en,\Mov}$ is finer than $\wit$. This means that in \eqref{eq:associated_CIW} we have $\wit_{\pa,\en,\Mov}(\Phiv) < 0$, that is, $\Prg(\Phiv\parallel\pa,\en,\Mov) > \Ppg(\pa,\en,\Mov)$.
\end{proof}

In Corollary \ref{cor:main}, the probability $\Ppg(\pa,\en,\Mov)$ can be calculated analytically or numerically, or at least upper bounded tightly enough, by solving a convex optimization problem. On the other hand, the probability $\Prg(\Phi_1,\Phi_2\parallel\pa,\en,\Mov)$ is assessable by using Alice's classical information, and then performing quantum measurements only on Bob's side. Since no entangled state is shared in the state discrimination protocol, Corollary \ref{cor:main} provides a more practical way to detect incompatibility than schemes based on Bell experiments or steering. In particular, as a fundamental fact, entanglement is not needed to detect incompatibility.

A particular instance of the scheme introduced in this section is the discrimination task with pre- and postmeasurement information described and studied in \cite{CaHeTo18,BaWeWi08,GoWe10,AkKaMa19}. 
In the latter task, Bob is asked to retrieve Alice's label $z\in X_1\cup X_2$ by simply performing a measurement $\No$ on the received state $a_z$, without making any processing of $a_z$ before that. The outcome set of $\No$ is assumed to be the Cartesian product $X_1\times X_2$. When Bob obtains the outcome $(x_1,x_2)$ and Alice communicates him that $z\in X_i$, his guess for $z$ is the value $x_i$. According to the time when Bob is informed about $i$ -- either before or after he performs the measurement -- the choice of $\No$ optimizing the correct guessing probability may be different. Consequently, also in this task Bob's maximal guessing probability may vary according to the pre- or postmeasurement information scenario.

The state discrimination task with pre- or postmeasurement information can be recast into the general scheme described at the beginning of this section by fixing the commutative algebras $\bb_i = \ell^\infty(X_i)$ as Bob's subsystems, identifying the measurement $\No$ with the channel $\Noh : \aa_*\to \ell^1 (X_1\times X_2)$ and letting $\Mo_i : X\to\mathcal{B}_i$ be the projective measurements corresponding to simply reading off the outcome of $\No$. In this way, Theorems 1 and 2 of \cite{CaHeTo19} are particular instances of the above Theorem \ref{th:main} and Corollary \ref{cor:main}.

\section{From measurement to channel incompatibility witnesses}\label{sec:ex_triv}

In this section, we provide examples of a tight channel incompatibility witness $\wit\in\CIW{\aa}{\bb_1,\bb_2}$ for each of the three cases $\bb_1 = \bb_2 = \ell^\infty(X)$ (incompatibility of two measurements), $\bb_1 = \ell^\infty(X)$ and $\bb_2 = \lh$ (incompatibility of a measurement and a channel) and $\bb_1 = \bb_2 = \lh$ (incompatibility of two channels). We always consider the  standard quantum input $\aa=\lh$. 
Moreover, we assume that the cardinality of the outcome set $X$ equals the dimension $d$ of the Hilbert space $\hh$.
Our examples are based on the fact that, by using the next simple observation, the results of \cite{CaHeTo19} immediately yield instances of CIWs also for $\bb_1$ and $\bb_2$ being non-abelian.

\begin{proposition}\label{prop:trivial}
Suppose $\wit\in\CIW{\aa}{\bb_1,\ell^\infty(X)}$ and let $\Po:X\to\bb_2$ be a measurement such that $\Po(x)$ is a nonzero projection of $\bb_2$ for all $x\in X$. Define the map $\wit_\Po :\CC{\aa}{\bb_1,\bb_2}\to\Rb$ as
\begin{equation}
\wit_\Po(\Phiv) = \wit(\Phi_1,\Poh\circ\Phi_2) \qquad \forall\Phiv\in\CC{\aa}{\bb_1,\bb_2} \,.
\end{equation}
Then $\wit_\Po\in\CIW{\aa}{\bb_1,\bb_2}$. Moreover, $\wit_\Po$ is tight if $\wit$ is such.
\end{proposition}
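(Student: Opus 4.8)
The plan is to verify the three defining properties (W\ref{it:CIWpositivity})--(W\ref{it:CIWaffinity}) for $\wit_\Po$ and then to treat tightness, organizing everything around a single construction: a channel that undoes the measurement $\Po$. The affinity property (W\ref{it:CIWaffinity}) is immediate, since $\Poh$ is linear, so the assignment $\Phiv=(\Phi_1,\Phi_2)\mapsto(\Phi_1,\Poh\circ\Phi_2)$ preserves convex combinations and $\wit_\Po$ inherits affinity from $\wit$. For the positivity property (W\ref{it:CIWpositivity}), I would invoke the fact recalled in Section~\ref{sec:inco} that compatibility is preserved under concatenation: if $(\Phi_1,\Phi_2)\in\cCC{\aa}{\bb_1,\bb_2}$, then concatenating the two margins with the channels $\mathrm{id}:\bb_{1*}\to\bb_{1*}$ and $\Poh:\bb_{2*}\to\ell^1(X)$ shows that $(\Phi_1,\Poh\circ\Phi_2)\in\cCC{\aa}{\bb_1,\ell^\infty(X)}$, whence (W\ref{it:CIWpositivity}) for $\wit$ gives $\wit_\Po(\Phiv)=\wit(\Phi_1,\Poh\circ\Phi_2)\geq 0$.

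The crux is the negativity property (W\ref{it:CIWnegativity}), and the main technical step is to build a right inverse of $\Poh$ at the level of channels. Since the $\Po(x)$ are nonzero and sum to $\id_{\bb_2}$, they are mutually orthogonal projections; for each $x$ I would fix a state $\sigma_x$ of $\bb_2$ with $\sigma_x(\Po(x))=1$, which forces $\sigma_x(\Po(y))=0$ for $y\neq x$ by orthogonality. The map $S:\bb_2\to\ell^\infty(X)$ given by $S(B)=(\sigma_x(B))_{x\in X}$ is positive and unital, hence completely positive because its image is commutative (so \cite[Thm.~3.9]{CBMOA03} applies), and it satisfies $S(\Po(x))=\delta_x$. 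Writing $S_*:\ell^1(X)\to\bb_{2*}$ for its predual channel, the identity $S\circ\Poh^*=\mathrm{id}_{\ell^\infty(X)}$ dualizes to $\Poh\circ S_*=\mathrm{id}_{\ell^1(X)}$. Now I would choose an incompatible pair $(\Xi_1,\Xi_2)\in\CC{\aa}{\bb_1,\ell^\infty(X)}$ with $\wit(\Xi_1,\Xi_2)<0$, which exists by (W\ref{it:CIWnegativity}) for $\wit$, and set $\Phi_2=S_*\circ\Xi_2$. Then $\Poh\circ\Phi_2=\Xi_2$, so $\wit_\Po(\Xi_1,\Phi_2)=\wit(\Xi_1,\Xi_2)<0$; by the positivity already proved, $(\Xi_1,\Phi_2)$ cannot be compatible, so it is automatically an incompatible pair detected by $\wit_\Po$, establishing (W\ref{it:CIWnegativity}).

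For tightness I would run the same construction starting from a compatible pair. Assuming $\wit$ is tight, pick $(\Theta_1,\Theta_2)\in\cCC{\aa}{\bb_1,\ell^\infty(X)}$ with $\wit(\Theta_1,\Theta_2)=0$ and set $\Phi_2=S_*\circ\Theta_2$. Since $\Phi_2$ is a concatenation of $\Theta_2$ with the channel $S_*$, preservation of compatibility under concatenation (with $\mathrm{id}$ on the first factor) guarantees $(\Theta_1,\Phi_2)\in\cCC{\aa}{\bb_1,\bb_2}$, while $\Poh\circ\Phi_2=\Theta_2$ yields $\wit_\Po(\Theta_1,\Phi_2)=0$. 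As property (W\ref{it:CIWpositivity}) for $\wit_\Po$ already gives $\wit_\Po\geq 0$ on all compatible pairs, the minimum of $\wit_\Po$ over $\cCC{\aa}{\bb_1,\bb_2}$ equals $0$, that is, $\wit_\Po=\bar\wit_\Po$ is tight. I expect the only delicate point to be the construction and complete positivity of the section $S$ together with the dualization $\Poh\circ S_*=\mathrm{id}$; once these are in place, everything else follows formally from the linearity of $\Poh$ and the stability of compatibility under concatenation.
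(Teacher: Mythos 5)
Your proof is correct and follows essentially the same route as the paper: both hinge on constructing a measure-and-prepare channel that is a right inverse of $\Poh$ (your $S_*$ is exactly the paper's $\Psi$, with the paper choosing the specific states $\pair{b_{0,x}}{\cdot} = \pair{b_0}{\Po(x)\,\cdot\,\Po(x)}/\pair{b_0}{\Po(x)}$ obtained by compressing a faithful state, where you allow arbitrary states $\sigma_x$ with $\sigma_x(\Po(x))=1$), and then pulling back a detected pair for (W\ref{it:CIWnegativity}) and a vanishing compatible pair for tightness. The remaining steps—affinity from linearity of $\Poh$, positivity from stability of compatibility under concatenation, and complete positivity of the section via the abelianness of $\ell^\infty(X)$—match the paper's argument exactly.
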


\begin{proof}
Properties (W\ref{it:CIWpositivity}) and (W\ref{it:CIWaffinity}) for $\wit_\Mo$ follow from the analogues properties for $\wit$ and from the fact that $\Phi_1 = {\rm id}\circ\Phi_1$ and $\Poh\circ\Phi_2$ are compatible if $\Phi_1$ and $\Phi_2$ are. In order to prove property (W\ref{it:CIWnegativity}), fix any faithful state $b_0\in\mathcal{S}(\bb_2)$. For all $x\in X$, let $b_{0,x}\in\mathcal{S}(\bb_2)$ be given by $\pair{b_{0,x}}{B} = \pair{b_0}{\Po(x)B\Po(x)}/\pair{b_0}{\Po(x)}$ for all $B\in\bb_2$. Further, define the linear map $\Psi:\ell^1(X)\to\bb_{2*}$ with $\Psi(f) = \sum_{x\in X} f(x)\, b_{0,x}$. Such a map is a channel, since its adjoint $\Psi^*:\bb_2\to\ell^\infty(X)$ is unital and $\ell^\infty(X)$ is abelian. 
Then, it is easy to check that the composition channel $\Poh\circ\Psi$ is the identity map of $\ell^1(X)$, from which it follows that, for any measurement $\Mo:X\to\aa$, we have $\Poh\circ\Psi\circ\Moh = \Moh$. 
In particular, for $\Phi_2 = \Psi\circ\Moh$ we have $\wit_\Po(\Phi_1,\Phi_2) < 0$ if $(\Phi_1,\Moh)\in\De{\wit}$, thus showing property (W\ref{it:CIWnegativity}) for $\wit_\Po$. If instead $\Phi_1$ and $\Moh$ are compatible and $\wit(\Phi_1,\Moh) = 0$, then also $\Phi_1 = {\rm id}\circ\Phi_1$ and $\Phi_2 = \Psi \circ \Moh$ are compatible and $\wit_\Po(\Phi_1,\Phi_2) = 0$, thus implying that $\wit_\Po$ is tight whenever $\wit$ is such.
\end{proof}

The composition channel $\Psi\circ\Moh$ introduced in the previous proof is the {\em measure-and-prepare channel} associated with the measurement $\Mo$ and the family of states $\{b_{0,x}\mid x\in X\}\subset\mathcal{S}(\bb_2)$. Explicitly,
\begin{equation*}
(\Psi\circ\Moh)(a) = \sum_{x\in X} \pair{a}{\Mo(x)}\, b_{0,x} \qquad\forall a\in\aa_* \,.
\end{equation*}
Note that $\Psi\circ\Moh$ is a channel even if the supports of the states $\{b_{0,x}\mid x\in X\}$ are not orthogonal.

We start constructing our examples by recalling a family of inequivalent tight witnesses $\wit_\bmu\in\CIW{\lh}{\ell^\infty(X),\ell^\infty(X)}$ which was described in \cite[Thm.~3]{CaHeTo19}. 
This family is constructed by fixing two mutually unbiased bases $\{e_x \mid x\in X\}$ and $\{f_x \mid x\in X\}$ of $\hh$, and it depends on the direction of a two dimensional vector $\bmu\in\Rb^2$. 
Here we recall only the following example, which corresponds to the equally weighted choice $\bmu = (1,1)$ for the parameter $\bmu$:
\begin{equation}\label{eq:meas-meas}
\wit_{\rm mm}(\Moh,\Noh) = \frac{1}{2d}\bigg\{\sqrt{d}(\sqrt{d}+1) - \sum_{x\in X} \left[\ip{e_x}{\Mo(x)e_x} + \ip{f_x}{\No(x)f_x}\right]\bigg\} \,.
\end{equation}
In the previous formula, $\ip{\cdot}{\cdot}$ is the inner product of the Hilbert space $\hh$. The measurement-measurement incompatibility witness \eqref{eq:meas-meas} gives zero when evaluated on the compatible pair of quantum measurements
\begin{equation}\label{eq:M0N0}
\Mo_0(x) = \gamma(d)\kb{e_x}{e_x} + (1-\gamma(d))\,\frac{\id}{d}\,,\qquad\quad \No_0(x) = \gamma(d)\kb{f_x}{f_x} + (1-\gamma(d))\,\frac{\id}{d} \,,
\end{equation}
where $\id$ is the identity operator on $\hh$ and $\gamma(d)$ is the real constant
\begin{equation}\label{eq:gamma(d)}
\gamma(d) = \frac{\sqrt{d}+2}{2(\sqrt{d}+1)}\,.
\end{equation}
This implies that the measurements 
\begin{equation*}
\Mo(x) = \gamma\kb{e_x}{e_x} + (1-\gamma)\,\frac{\id}{d}\,,\qquad\quad \No(x) = \gamma\kb{f_x}{f_x} + (1-\gamma)\,\frac{\id}{d} \,,
\end{equation*}
are incompatible if and only if $\gamma(d) < \gamma \leq 1$, a result that was earlier obtained in \cite{UoLuMoHe16,DeSkFrBr19} by using different methods.

The previous measurement-measurement witness can be immediately turned into a tight witness $\wit_{\rm mc}\in\CIW{\lh}{\ell^\infty(X),\lh}$ by means of Proposition \ref{prop:trivial}. 
Indeed, it is enough to fix another orthonormal basis $\{h_x \mid x\in X\}$, set $\Po(x) = \kb{h_x}{h_x}$ and define
\begin{equation}\label{eq:meas-chan_bad}
\begin{aligned}
& \wit_{\rm mc}(\Moh,\Lambda) = (\wit_{\rm mm})_\Po(\Moh,\Lambda) = \wit_{\rm mm}(\Moh,\Poh\circ\Lambda) \\
& \qquad = \frac{1}{2d}\bigg\{\sqrt{d}(\sqrt{d}+1) - \sum_{x\in X} \left[\ip{e_x}{\Mo(x)e_x} + \ip{h_x}{\Lambda(\kb{f_x}{f_x})\,h_x}\right]\bigg\} \,.
\end{aligned}
\end{equation}
We have $\wit_{\rm mc}(\Moh_0,\Lambda_{\No_0}) = 0$ for the compatible pair $(\Moh_0,\Lambda_{\No_0})$, in which $\Moh_0$ is given by \eqref{eq:M0N0} and $\Lambda_{\No_0}$ is the measure-and-prepare quantum channel
\begin{equation}\label{eq:meas-prep}
\Lambda_{\No_0}(a) = \sum_{x\in X}\tr{a\No_0(x)} \kb{h_x}{h_x} \qquad\forall a\in\loneh
\end{equation}
with $\No_0$ still given by \eqref{eq:M0N0}. {Here, ${\rm tr}$ denotes the trace of $\hh$.}

In order to find an example of a tight witness $\wit_{\rm cc}\in\CIW{\lh}{\lh,\lh}$, we can still proceed along the same lines as previously. 
Specifically, we can use the witness \eqref{eq:M0N0} and any two bases $\{g_x \mid x\in X\}$ and $\{h_x \mid x\in X\}$ of $\hh$ in order to construct $\wit_{\rm cc}$ by means of Proposition \ref{prop:trivial}. In this way, dropping the irrelevant factor $1/(2d)$, the resulting witness is
\begin{equation}\label{eq:chan-chan_bad}
\wit_{\rm cc}(\Theta,\Lambda) = \sqrt{d}(\sqrt{d}+1) - \sum_{x\in X} \left[\ip{g_x}{\Theta(\kb{e_x}{e_x})\,g_x} + \ip{h_x}{\Lambda(\kb{f_x}{f_x})\,h_x}\right]
\end{equation}
for all $(\Theta,\Lambda)\in\CC{\lh}{\lh,\lh}$.

\section{Incompatibility witness related to approximate cloning}\label{sec:ex_clon}
\label{app:chan-chan}

As we have seen, the measurement-channel and the channel-channel incompatibility witnesses $\wit_{\rm mc}$ and $\wit_{\rm cc}$ derived in the previous section are adaptations of the measurement-measurement witness $\wit_{\rm mm}$ found in \cite{CaHeTo19} and constructed by means of two mutually unbiased bases. 
Here we show that, by using a different method, another tight witness $\witt_{\rm cc}\in\CIW{\lh}{\lh,\lh}$ can also be derived by fixing only one arbitrary orthonormal basis $\{e_x \mid x\in X\}$ of $\hh$ and setting
\begin{subequations}\label{eq:chan-chan}
\begin{equation}\label{eq:chan-chan_a}
\witt_{\rm cc}(\Theta,\Lambda) = d(d+1) - \sum_{x,y\in X}\ip{e_x}{(\Theta+\Lambda)(\kb{e_x}{e_y})\,e_y} \,.
\end{equation}
Actually, the dependence of $\witt_{\rm cc}$ on the choice of the basis of $\hh$ is not relevant. Indeed, \eqref{eq:chan-chan_a} can be rewritten in a basis independent form by using the trace ${\rm Tr}$ of the linear space $\lh$, so that
\begin{equation}\label{eq:chan-chan_b}
\witt_{\rm cc}(\Theta,\Lambda) = d(d+1) - {\rm Tr}[\Theta+\Lambda] \,.
\end{equation}
\end{subequations}

For the witness $\witt_{\rm cc}$, we have $\witt_{\rm cc}(\Theta_0,\Lambda_0) = 0$ when $\Theta_0$ and $\Lambda_0$ are the two margins of the optimal approximate cloning channel found in \cite{Werner98,KeWe99}, i.e., the depolarizing channels
\begin{equation}\label{eq:W_chan}
\Theta_0(a) = \Lambda_0(a) = \gamma(d^2) a + (1-\gamma(d^2))\tr{a}\,\frac{\id}{d}
\end{equation}
with $\gamma(d^2)$ defined by \eqref{eq:gamma(d)}. 

One can show with some calculation that the witnesses $\wit_{\rm cc}$ and $\witt_{\rm cc}$ are detection inequivalent, since inserting $\Theta_0$ and $\Lambda_0$ into \eqref{eq:chan-chan_bad} yields
\begin{equation*}
\wit_{\rm cc}(\Theta_0,\Lambda_0) = (\sqrt{d}+2)(\sqrt{d}-1) + \gamma(d^2) \bigg[2-\sum_{x\in X} (\abs{\ip{e_x}{g_x}}^2 + \abs{\ip{f_x}{h_x}}^2)\bigg] \,,
\end{equation*}
which is strictly positive for all $d\geq 2$ and any choice of the bases $\{g_x \mid x\in X\}$ and $\{h_x \mid x\in X\}$. Thus, for suitably small $\varepsilon>0$, the CIW $\wit_{\rm cc}$ does not detect the incompatible channels $\Theta = (1+\varepsilon)\Theta_0 - \varepsilon\tr{\cdot}\id/d$ and $\Lambda = (1+\varepsilon)\Lambda_0 - \varepsilon\tr{\cdot}\id/d$, which instead are detected by $\witt_{\rm cc}$.

In a similar way, if we insert the compatible measure-and-prepare channels
\begin{equation}
\Theta_{\Mo_0}(a) = \sum_{x\in X}\tr{a\Mo_0(x)} \kb{g_x}{g_x} \,,\qquad \Lambda_{\No_0}(a) = \sum_{x\in X}\tr{a\No_0(x)} \kb{h_x}{h_x}
\end{equation}
into \eqref{eq:chan-chan}, we obtain
\begin{equation*}
\witt_{\rm cc}(\Theta_{\Mo_0},\Lambda_{\No_0}) = (d+2)(d-1) + \gamma(d) \bigg[2-\sum_{x\in X} (\abs{\ip{e_x}{g_x}}^2 + \abs{\ip{f_x}{h_x}}^2)\bigg] \,,
\end{equation*}
which is strictly positive for all $d\geq 2$ and any bases $\{g_x \mid x\in X\}$ and $\{h_x \mid x\in X\}$. Since on the other hand $\wit_{\rm cc}(\Theta_{\Mo_0},\Lambda_{\No_0}) = 0$, a similar reasoning as in the previous paragraph yields that $\De{\wit_{\rm cc}}\not\subseteq\De{\witt_{\rm cc}}$. Thus, neither $\wit_{\rm cc}$ is finer than $\witt_{\rm cc}$, nor $\witt_{\rm cc}$ is finer than $\wit_{\rm cc}$, thus proving that the two witnesses $\wit_{\rm cc}$ and $\witt_{\rm cc}$ are genuinely diverse.

The rest of this section is devoted to the proof that the map $\witt_{\rm cc}$ defined in \eqref{eq:chan-chan} is a tight CIW, and that $\witt_{\rm cc}(\Theta_0,\Lambda_0) = 0$ when $\Theta_0$ and $\Lambda_0$ are the compatible channels defined in \eqref{eq:W_chan}.

For any pair of channels $(\Theta,\Lambda)\in \CC{\lh}{\lh ,\lh}$, let
\begin{equation*}
\witt_0 (\Theta,\Lambda) =\sum_{i,j=1}^d  \ip{e_i}{\left(\Theta+\Lambda\right) (\kb{e_i}{e_j})e_j}
\end{equation*}
be the linear part of the witness \eqref{eq:chan-chan}.  
By denoting
\begin{equation*}
\omega = \frac{1}{\sqrt{d}} \sum_{i=1}^d e_i \otimes e_i 
\end{equation*}
the maximally entangled state associated with the given basis, the linear functional $\witt_0$ can be rewritten as
\begin{equation*}
\witt_0 (\Theta,\Lambda)= d^2 \left\{\tr{\kb{\omega}{\omega}\left(\Theta^* \otimes {\rm id}^*\right)(\kb{\omega}{\omega})} + \tr{\kb{\omega}{\omega}\left(\Lambda^* \otimes{\rm id}^*\right)(\kb{\omega}{\omega})}\right\} \,,
\end{equation*}
where ${\rm id}:\loneh\to\loneh$ is the identity channel.

Now, suppose $\Theta$ and $\Lambda$ are compatible, and let $\Phi$ be a joint channel for them. 
Moreover, denote by $F \colon \hh \otimes \hh \to \hh\otimes \hh$ the flip operator $F(u\otimes v)= v\otimes u$. Then, using the marginality conditions $\Phi^*(A\otimes\id) = \Theta^*$ and $\Phi^*(\id\otimes B) = \Lambda^*$ together with the relation $A\otimes\id = F(\id\otimes A)F$, we have
\begin{equation}\label{eq:chanchancomp}
\begin{aligned}
\witt_0 (\Theta,\Lambda) = &\, d^2 \big\{\tr{\kb{\omega}{\omega}\left( \Phi^\ast \otimes {\rm id}^*\right) \left((F\otimes \id)(\id\otimes \kb{\omega}{\omega})(F\otimes \id)\right)} \\
& \, + \tr{\kb{\omega}{\omega}\left( \Phi^\ast \otimes {\rm id}^* \right) (\id \otimes \kb{\omega}{\omega})}\big\} \\
= &\, d^2\tr{(\Phi \otimes {\rm id})(\kb{\omega}{\omega})\, E} \,,
\end{aligned}
\end{equation}
where $E$ is the selfadjoint positive operator
\begin{equation*}
E= (F\otimes \id) (\id \otimes \kb{\omega}{\omega})(F\otimes \id) + \id \otimes \kb{\omega}{\omega} \,.
\end{equation*}
Hence, for any compatible pair $(\Theta,\Lambda)$, we have the following upper bound for \eqref{eq:chanchancomp}
\begin{equation}
\label{eq:chanchanbound1}
\witt_0 (\Theta,\Lambda)\leq d^2\lambda_{\rm max}(E) \,,
\end{equation}
where $\lambda_{\rm max}(E)$ is the maximal eigenvalue of $E$.

We now evaluate $\lambda_{\rm max}(E)$ by finding the eigenspace decomposition of $E$. To this aim, we introduce the two operators $S_\pm = \frac{1}{2} \left( \id \pm F\right)$,
which are the orthogonal projections onto the symmetric and antisymmetric subspaces of $\hh\otimes \hh$, respectively. Since 
\begin{equation}\label{eq:E_action}
\begin{aligned}
& E\left(
(e_l \otimes e_m \pm e_m \otimes e_l)\otimes u
\right) = \\
& = \frac{1}{d} \left[
\sum_{i=1}^d \ip{e_m}{u} \left(
e_l \otimes e_i \pm e_i \otimes e_l
\right)\otimes e_i
\pm
\sum_{i=1}^d \ip{e_l}{u} \left(
e_m \otimes e_i \pm e_i \otimes e_m
\right)\otimes e_i
\right]\,,
\end{aligned}
\end{equation}
we conclude that
\begin{equation*}
E\left(
S_\pm(\hh\otimes \hh)\otimes \hh
\right)=\left\{
\sum_{i=1}^d  (v\otimes  e_i \pm e_i \otimes v)\otimes e_i \,\mid \, v \in \hh
\right\}\,.
\end{equation*}
Since $E$ commutes with both projections $S_+\otimes\id$ and $S_-\otimes\id$ and $S_+ + S_- = \id$, we have the orthogonal decomposition
\begin{equation*}
E(\hh\otimes \hh \otimes \hh) = E\left(
S_+(\hh\otimes \hh)\otimes \hh
\right)\oplus E\left(
S_-(\hh\otimes \hh)\otimes \hh
\right) \, .
\end{equation*}
Now we show that $E\left(
S_+(\hh\otimes \hh)\otimes \hh
\right)$ and $E\left(
S_-(\hh\otimes \hh)\otimes \hh
\right)$ are the eigenspaces corresponding to the only two nonzero eigenvalues $\lambda_\pm(E) = (d\pm 1)/d$ of $E$. Indeed, another application of \eqref{eq:E_action} yields
\begin{equation*}
E\left(
\sum_{i=1}^d  (v\otimes  e_i \pm e_i \otimes v)\otimes e_i 
\right) = \frac{d\pm 1}{d} \sum_{i=1}^d  (v\otimes  e_i \pm e_i \otimes v)\otimes e_i \,.
\end{equation*}
We thus conclude that $\lambda_{\rm max}(E) = (d+1)/d$, hence, for any compatible pair $(\Theta,\Lambda)$, by \eqref{eq:chanchanbound1} we have
\begin{equation}\label{eq:chanchanbound2}
\witt_0 (\Theta,\Lambda)\leq d(d+1) \,.
\end{equation}

On the other hand, since for the identity channel we have
\begin{equation*}
\witt_0 (\mathrm{id}, \mathrm{id})=2d^2 \abs{\ip{\omega}{\omega}}^2 = 2d^2 > d(d+1) \,,
\end{equation*}
it follows that \eqref{eq:chan-chan} defines a CIW.

Finally, as we already noticed, by \cite{Werner98,KeWe99} the two depolarizing channels $\Theta_0$ and $\Lambda_0$ defined in \eqref{eq:W_chan} are compatible, and an easy calculation yields $\witt_0 (\Theta_0,\Lambda_0)= d(d+1)$. 
The bound \eqref{eq:chanchanbound2} is thus attained on $\cCC{\lh}{\lh,\lh}$, hence the witness \eqref{eq:chan-chan} is tight.

\section{Discussion}

We have proved that incompatibility can always be detected by means of a state discrimination protocol. 
We have done it for systems described by arbitrary finite dimensional von Neumann algebras, thus encompassing all possible hybrid quantum-classical cases. Our approach was based on the notion of channel incompatibility witness and its connection with a state discrimination task with intermediate partial information. Once we established this connection in Theorem \ref{th:main}, the main result in Corollary \ref{cor:main} easily followed from standard separation results for convex compact sets. We pointed out that all incompatible pairs of channels can be detected by tuning only the state ensemble on Alice's side, while Bob can keep his measurements fixed to this purpose.  

The essential point in the presented formalism is that the set of all compatible channels is a convex compact subset of all pairs of channels. 
In fact, a similar mathematical technique works for any binary relation $\mathcal{R}\subset\CC{\aa}{\bb_1,\bb_2}$ that is convex and compact.
The state discrimination protocol is hence useful to detect also other resources, mathematically described as subsets of $\CC{\aa}{\bb_1,\bb_2}$ with convex compact complements.

We finally provided four examples of channel incompatibility witnesses with standard quantum input $\aa=\lh$. The first example \eqref{eq:meas-meas} applies to measurement-measurement incompatibility and was taken from \cite{CaHeTo19}, while the second \eqref{eq:meas-chan_bad} and the third \eqref{eq:chan-chan_bad} are adaptations of the former one. The last example of channel-channel incompatibility witness \eqref{eq:chan-chan} is unrelated to the measurement-measurement case. It would be interesting to develop it into a whole family of inequivalent witnesses in analogy with the results of \cite{CaHeTo19}. We also point out that the mea\-sure\-ment-channel case deserves further study, as the only presented example relies upon the measurement-measurement case. We defer more detailed investigations on these topics to future work.

\section*{Acknowledgements}

TH  acknowledges financial support from the Academy of Finland via the Centre of Excellence program (Project no.~312058) as well as Project no.~287750.
TM acknowledges financial support from JSPS (KAKENHI Grant Number 15K04998).

\section*{Note added}

During the preparation of the manuscript we became aware of recent related works by Uola, Kraft and Abbott \cite{UoKrAb19} and by Mori \cite{Mori19}.

\end{document}